\newcommand{\N}{\mathbb{N}}
\newcommand{\Z}{\mathbb{Z}}
\newcommand{\Q}{\mathbb{Q}}
\newcommand{\R}{\mathbb{R}}
\newcommand{\C}{\mathbb{C}}
\newcommand{\F}{\mathbb{F}}
\newcommand{\calO}{\mathcal{O}}
\newcommand{\fraka}{\mathfrak{a}}
\newcommand{\frakb}{\mathfrak{b}}
\newcommand{\frakp}{\mathfrak{p}}
\newcommand{\abs}[1]{{\left|{#1}\right|}}
\newcommand{\ggen}[1]{{\left\langle{#1}\right\rangle}}
\newcommand{\floor}[1]{{\left\lfloor{#1}\right\rfloor}}
\newcommand{\ceil}[1]{{\left\lceil{#1}\right\rceil}}
\DeclareMathOperator{\modulo}{mod}
\DeclareMathOperator{\Red}{Red}
\DeclareMathOperator{\fRep}{Rep^{\mathit{f}}}
\DeclareMathOperator{\fRepd}{Rep^{\mathit{f}}_{discrete}}
\DeclareMathOperator{\bs}{bs}
\DeclareMathOperator{\gs}{gs}
\DeclareMathOperator{\Div}{Div}
\DeclareMathOperator{\Pic}{Pic}
\DeclareMathOperator{\PId}{PId}
\newtheorem{theorem}{Theorem}[section]
\newtheorem{corollary}{Corollary}
\newtheorem{lemma}[theorem]{Lemma}
\newtheorem{proposition}{Proposition}
\theoremstyle{definition}
\newtheorem{definition}[theorem]{Definition}
\newtheorem{remark}{Remark}
\newtheorem{remarks}{Remarks}
\newenvironment{enuma}{\begin{enumerate}[\upshape (a)]}{\end{enumerate}}
\newenvironment{enum1}{\begin{enumerate}[\upshape (1)]}{\end{enumerate}}
\title[Cyclic Infrastructures and Pohlig-Hellman]
      {Groups from Cyclic Infrastructures and Pohlig-Hellman in Certain Infrastructures}
\author[Felix Fontein]{}
\subjclass{Primary: 94A60, 14Q05; Secondary: 11Y99, 14G50, 14H45}
 \keywords{Infrastructures, Pohlig-Hellman, function fields, cryptography}
\thanks{This work has been supported in part by the Swiss National Science Foundation under grant
no.~107887.}
\begin{document}
  \maketitle
  
  \centerline{\scshape Felix Fontein }
  \medskip
{\footnotesize
 \centerline{Institut f\"ur Mathematik}
  \centerline{Universit\"at Z\"urich}
   \centerline{CH-8057, Switzerland}
    \centerline{felix.fontein@math.uzh.ch}
}

\bigskip

  \begin{abstract}
    In discrete logarithm based cryptography, a method by Pohlig and Hellman allows solving the
    discrete logarithm problem efficiently if the group order is known and has no large prime
    factors. The consequence is that such groups are avoided. In the past, there have been proposals
    for cryptography based on cyclic infrastructures. We will show that the Pohlig-Hellman method
    can be adapted to certain cyclic infrastructures, which similarly implies that certain
    infrastructures should not be used for cryptography. This generalizes a result by M\"uller,
    Vanstone and Zuccherato for infrastructures obtained from hyperelliptic function fields.

    We recall the Pohlig-Hellman method, define the concept of a cyclic infrastructure and briefly
    describe how to obtain such infrastructures from certain function fields of unit rank~one. Then,
    we describe how to obtain cyclic groups from discrete cyclic infrastructures and how to apply
    the Pohlig-Hellman meth\-od to compute absolute distances, which is in general a computationally
    hard problem for cyclic infrastructures. Moreover, we give an algorithm which allows to test
    whether an infrastructure satisfies certain requirements needed for applying the Pohlig-Hellman
    method, and discuss whether the Pohlig-Hellman method is applicable in infrastructures obtained
    from number fields. Finally, we discuss how this influences cryptography based on cyclic
    infrastructures.
  \end{abstract}

  \section{Introduction}

  Since the advent of cryptographic protocols such as the Diffie-Hellman key exchange protocol and
  ElGamal encryption, the security of many cryptographic protocols is based on the hardness of the
  \emph{discrete logarithm problem}: given $h$, an element of a finite cyclic group~$\ggen{g}$, find
  an integer~$n \in \N$ such that $g^n = h$. In 1978, S.~C.~Pohlig and M.~E.~Hellman
  \cite{pohlig-hellman} presented an algorithm which allows to quickly solve the discrete logarithm
  problem in a finite cyclic group if the group order~$\abs{G}$ has a known factorization into a
  product of relatively small primes (see Section~\ref{phexplained} for more details). Since then,
  one prefers to use groups of (almost) prime order or groups whose order has at least one large
  prime factor for discrete logarithm based cryptography, to avoid this kind of attack.

  In 1990, R.~Scheidler, J.~A.~Buchmann and H.~C.~Williams described a key exchange
  \cite{buchmann-williams-qrkeyexchange, scheidler-buchmann-williams-qrkeyexchange}, which was not
  based on cyclic groups but on a structure first introduced by D.~Shanks in
  1972~\cite{shanks-infra}, called the \emph{infrastructure} of a real quadratic number field. This
  structure behaves similar to finite cyclic groups, with the main difference that the operation
  corresponding to multiplication is not associative. This structure was generalized from real
  quadratic number fields to arbitrary number fields of unit rank~one
  \cite{buchmannwilliams-infrastructure}, and also to real quadratic function fields
  \cite{stein-zimmer, stein-da, stein-intro2} and more general function fields
  \cite{scheidler-stein-purelyunitrank1, scheidler-infrastructurepurelycubic}.  Moreover, the key
  exchange protocol for infrastructures was refined \cite{JSW-RQFKE, JSW-RQFKE2} and extended to
  real quadratic function fields \cite{SSW-KEiRQCFF, JSS-CPoRHC}.  The security of these protocols
  is mostly based on the fact that computing distances in infrastructures in general is assumed to
  be hard. As the problem of computing distances in infrastructures is related (see
  Section~\ref{phidi}) to the problem of computing discrete logarithms in finite cyclic groups, one
  has to ask the question whether the idea of Pohlig-Hellman can be applied in this setting.

  In 1998, V.~M\"uller, S.~Vanstone and R.~Zuccherato \cite{mueller-vanstone-zuccherato} answered
  this question positively in the case of infrastructures obtained from real quadratic function
  fields of characteristic~2. We will generalize this to obtain a positive answer for a more general
  class of infrastructures, which includes all infrastructures obtained from function fields. Then,
  we will argue why this is probably not possible for infrastructures obtained from number
  fields, at least without further input.

  In Section~\ref{cycif}, we will define the concept of a cyclic infrastructure and show how such
  infrastructures can be obtained from certain global function fields with two infinite
  places. After that, in Section~\ref{obtcg}, we will show how to obtain cyclic groups from such
  infrastructures and how to efficiently compute in them, assuming that one can efficiently compute
  in the underlying infrastructure. In Section~\ref{phexplained}, we will recall how the
  Pohlig-Hellman method works, and in Section~\ref{phidi} we will show how Pohlig-Hellman can be
  applied in the case of discrete cyclic infrastructures. Then, in Section~\ref{tfsc} we will
  describe an algorithm to test whether the main requirement of the Pohlig-Hellman method, namely
  that the group order is smooth, is satisfied. Finally, in Section~\ref{phaibonf}, we will discuss
  the number field case, and in Section~\ref{conclsn}, we will explain the consequences for cyclic
  infrastructure based cryptography.

  \section{Cyclic infrastructures}
  \label{cycif}

  In this section, we define an abstract version of a cyclic infrastructure. This definition,
  including the description of baby steps and giant steps, is based on the interpretation of Shanks'
  infrastructure in context of a `circle group' by H.~W.~Lenstra \cite{lenstra-infrastructure}, even
  though he uses a different distance function.

  Roughly speaking, a cyclic infrastructure can be interpreted as a circle with a finite set of points
  on it.

  \begin{definition}
    Let $R \in \R_{>0}$ be a positive real number. A \emph{cyclic infrastructure}~$(X, d)$ of
    \emph{circumference}~$R$ is a non-empty finite set~$X$ with an injective map~$d : X \to \R/R\Z$,
    called the \emph{distance} function.
  \end{definition}

  \begin{definition}
    We say that a cyclic infrastructure~$(X, d)$ of circumference~$R$ is \emph{discrete} if $R \in
    \Z$ and $d(X) \subseteq \Z/R\Z$.
  \end{definition}

  One can interpret finite cyclic groups as discrete cyclic infrastructures as follows: Let $G =
  \ggen{g}$ be a finite cyclic group of order~$m$ and $d : G \to \Z/m\Z$ be the \emph{discrete
  logarithm} map\footnote{The discrete logarithm of an element~$h \in \ggen{g}$ is sometimes, in
  particular in Elementary Number Theory, also called the \emph{index} of $h$ with respect to $g$.}
  (to the base~$g$), i.e. we have $g^{d(h)} = h$ for every~$h \in \ggen{g}$. By interpreting
  $\Z/m\Z$ as a subset of $\R/m\Z$, we get that $(G, d)$ is a discrete cyclic infrastructure of
  circumference~$m$.

  An infrastructure has two operations, namely baby steps and giant steps. For their definition, we
  need the following notation:

  \begin{definition}
    Let $R \in \R_{>0}$ and let $x, y \in \R/R\Z$. Write $x = \hat{x} + R\Z$ and $y = \hat{y} + R\Z$
    with $\hat{x}, \hat{y} \in \R$ such that $\hat{x} \le \hat{y} < \hat{x} + R$. Define \[ [x, y]
    := \{ t + R\Z \mid t \in \R, \; \hat{x} \le t \le \hat{y} \}. \]
  \end{definition}

  If one interprets $\R/R\Z$ as a circle with circumference~$R$, and $x$ and $y$ as points on this
  circle, the set~$[x, y]$ can be interpreted as the points on the circle which lie on the arc
  beginning at $x$ and ending at $y$.

  Now we can define baby steps and giant steps. We will exclude the case~$\abs{X} = 1$, as in this
  case the infrastructure is trivial and not of practical interest.

  \begin{proposition}
    Let $(X, d)$ be a cyclic infrastructure of circumference~$R$. Assume that $\abs{X} > 1$.
    \begin{enuma}
      \item Then there is a unique bijective fixed point free map~$\bs : X \to X$ such that for
      every~$x \in X$, we have \[ [d(x), d(\bs(x))] \cap d(X) = \{ d(x), d(\bs(x)) \}. \] This map
      is called \emph{baby step} map.
      \item Moreover, there is a unique map~$\gs : X \times X \to X$ such that for every~$x, y \in
      X$, we have \[ [d(x) + d(y), d(\gs(x, y))] \cap d(X) = \{ d(\gs(x, y)) \}. \] This map is
      called \emph{giant step} map.
    \end{enuma}
  \end{proposition}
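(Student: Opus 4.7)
The plan is to handle both parts by the same method: lift each point of $\R/R\Z$ to a distinguished real representative, and reduce the claim to picking the minimum of a finite set of reals. The definition of $[x,y]$ already prescribes such a lift, since given $x$ the representatives of every other point of $\R/R\Z$ lie uniquely in $[\hat x,\hat x+R)$, and $[x,y]$ is simply the image of $[\hat x,\hat y]$.

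For part~(a), I would fix $x\in X$, pick any representative $\hat x$ of $d(x)$, and for every $y\in X\setminus\{x\}$ take the unique representative of $d(y)$ in $(\hat x,\hat x+R)$ (open at $\hat x$ by injectivity of $d$). Since $\abs X>1$, this finite set of reals is non-empty; let $\hat y_0$ be its minimum and set $\bs(x):=d^{-1}(\hat y_0+R\Z)$. Then $[d(x),d(\bs(x))]$ corresponds to $[\hat x,\hat y_0]$, which meets $d(X)$ only at its endpoints, and $\hat y_0>\hat x$ gives fixed-point freeness. For uniqueness, if $\bs'$ is another such map with corresponding representative $\hat y_0'\in(\hat x,\hat x+R)$, comparing the smaller of $\hat y_0,\hat y_0'$ with the defining interval of the larger forces $d(\bs(x))=d(\bs'(x))$ once the case $d(\bs(x))=d(x)$ is ruled out by fixed-point freeness. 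For injectivity of $\bs$: if $\bs(x_1)=\bs(x_2)=z$, fix a representative $\hat z$ of $d(z)$, take representatives $\hat x_i\in(\hat z-R,\hat z)$, assume WLOG $\hat x_1\le\hat x_2$, and observe $d(x_2)\in[d(x_1),d(z)]\cap d(X)$, which forces $x_1=x_2$ after excluding $d(x_2)=d(z)$ by fixed-point freeness. Finiteness of $X$ then upgrades injectivity to bijectivity.

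For part~(b), the argument is identical with $s:=d(x)+d(y)$ in place of $d(x)$. Pick a representative $\hat s$, let $\hat z_0$ be the minimum of the representatives of elements of $d(X)$ in $[\hat s,\hat s+R)$ (now closed on the left, since $s$ may itself belong to $d(X)$), and set $\gs(x,y):=d^{-1}(\hat z_0+R\Z)$. Existence, the defining intersection property, and uniqueness are then verified exactly as in~(a); no fixed-point condition is involved, so the uniqueness argument is if anything simpler.

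The only real subtlety, and the main place to slip up, is the bookkeeping around the asymmetric, representative-dependent definition of $[\cdot,\cdot]$: one must choose $\hat x\le\hat y<\hat x+R$ consistently, and in~(b) one must handle the mild degeneracy $s\in d(X)$, in which case $\hat z_0=\hat s$ and the arc collapses to the single point~$s$. Apart from that, everything reduces to the well-ordering of finite subsets of~$\R$.
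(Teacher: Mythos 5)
The paper states this proposition without proof, so there is no argument of the author's to compare yours against; it is treated as a routine consequence of the definitions. Your proof is correct and is the expected one: lifting to the distinguished representatives in $[\hat x,\hat x+R)$, taking minima of finite sets of reals, and handling the endpoint conventions (open at $\hat x$ in part (a) by injectivity of $d$, closed at $\hat s$ in part (b) where the arc may degenerate to a point) are exactly the points that need care, and you address all of them, including the use of fixed-point freeness in the uniqueness and injectivity steps for the baby step map.
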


  Let $G = \ggen{g}$ be a finite cyclic group of order~$n > 1$ and let $d : G \to \Z/n\Z$ be the
  discrete logarithm map. Then, for the cyclic infrastructure $(G, d)$, we have $\bs(h) = g h$ and
  $\gs(h, h') = h h'$ for all $h, h' \in G$. Applying $d$, this translates to $d(\bs(h)) = d(h) + 1$
  and $d(\gs(h, h')) = d(h) + d(h')$. This shows that baby and giant steps in arbitrary
  infrastructures generalize the group operation of a finite cyclic group.

  In the case of finite cyclic groups, both baby steps and giant steps are basically the same
  operation. In arbitrary infrastructures, this is not the case, as in general there is no
  element~$x \in X$ with $\gs(x, y) = \bs(y)$ for all $y \in X$.

  In general, cyclic infrastructures behave similar to cyclic groups, with the main difference being
  that the giant step operation is not necessarily associative, but ``almost'' associative in the
  sense that \[ d(\gs(x, y)) \approx d(x) + d(y). \] Here, ``$\approx$'' for elements in $\R/R\Z$
  means that both sides have representatives in $\R$ which are relatively close to each other.

  We want to close this section by showing how to obtain discrete cyclic infrastructures from
  certain global function fields. Let $\F_q$ be a finite field with $q$~elements and $K = \F_q(x,
  y)$ a finite separable extension of $\F_q(x)$, $x$ transcendental over $\F_q$, such that $\F_q$ is
  relatively algebraically closed in $K$. Let $\calO$ be the integral closure of $\F_q[x]$ in $K$,
  and assume that the degree valuation of $\F_q(x)$ has exactly two extensions to $K$; these are the
  \emph{infinite places} $\frakp_1$ and $\frakp_2$ of $K$. Let $\nu_i : K \to \Z \cup \{ \infty \}$
  be the normalized valuation associated to $\frakp_i$, $i = 1, 2$.

  Now, by Dirichlet's Unit Theorem for function fields \cite[p.~299, Theorem~9.5]{lorenzini},
  $\calO^* = \ggen{\varepsilon} \oplus \F_q^*$ for some $\varepsilon \in \calO^* \setminus \F_q^*$;
  without loss of generality, let $R := -\nu_1(\varepsilon) > 0$. Assume that \textbf{at least one
  of the infinite places has degree~one}.\footnote{If one drops this assumption, one cannot show
  that one has `enough' reduced ideals, which makes computation of baby and giant steps
  problematic. One has to use another definition of reduced ideals, and define an equivalence
  relation on the set of all reduced ideals to make $d$ injective.}

  If $a, b \in K^*$ are two elements, then the principal fractional ideals~$\calO a$ and $\calO b$
  are equal if, and only if, $\frac{a}{b} \in \calO^*$. Therefore, if $\PId(\calO)$ denotes the set
  of non-zero principal fractional ideals of $\calO$, we have a well-defined map \[ D : \PId(\calO)
  \to \Z/R\Z, \qquad \calO \frac{1}{a} \mapsto -\nu_1(a) + R\Z. \]

  We say that a principal fractional ideal~$\fraka \in \PId(K)$ is \emph{reduced} if $1 \in \fraka$
  and, for every~$a \in \fraka \setminus \{ 0 \}$ with $\nu_i(a) \ge 0$, $i = 1, 2$, we must have $a \in
  \F_q^*$. Denote the set of all reduced principal fractional ideals by $\Red(K)$. Now one has that
  $d := D|_{\Red(K)}$ is injective,\footnote{Let $\calO \frac{1}{a}, \calO \frac{1}{b} \in \Red(K)$
  with $\nu_1(a) = \nu_1(b) + k R$, $k \in \Z$. As $\calO \frac{1}{a} = \calO \frac{1}{a
  \varepsilon^{-k}}$ and $\nu_1(a \varepsilon^{-k}) = \nu_1(b)$, we assume~$k = 0$ without loss of
  generality. Now $\frac{b}{a} \in \calO \frac{1}{a}$ and $\nu_1(\frac{b}{a}) = 0$. If
  $\nu_2(\frac{b}{a}) \ge 0$, then we must have $\frac{b}{a} \in \F_q^*$ as $\calO \frac{1}{a}$ is
  reduced, whence $\calO \frac{1}{a} = \calO \frac{1}{b}$. If $\nu_2(\frac{b}{a}) < 0$, we have
  $\nu_2(\frac{a}{b}) > 0$, $\nu_1(\frac{a}{b}) = 0$ and $\frac{a}{b} \in \calO \frac{1}{b}$,
  contradicting that $\frac{a}{b} \in \F_q^*$ as $\calO \frac{1}{b}$ is reduced.} which, in
  particular, shows that $X := \Red(K)$ is finite. Therefore, $(X, d)$ is a discrete cyclic
  infrastructure.

  In certain cases, namely real quadratic (i.e. real hyperelliptic) function fields
  \cite{stein-zimmer, stein-intro2, JSS-FAoHCvCFE} and for certain cubic function fields of unit
  rank one \cite{scheidler-stein-purelyunitrank1, scheidler-infrastructurepurelycubic}, we can
  efficiently compute baby steps, inverse baby steps and giant steps (i.e. given $x, y \in X$, we
  can compute $\bs(x)$, $\bs^{-1}(x)$ and $\gs(x, y)$), and we can efficiently compute
  \emph{relative distances}\footnote{From now on, we will interpret these relative distances as real
  numbers instead of elements of $\R/R\Z$, by identifying them with their smallest non-negative
  representative, i.e. we identify $a + R\Z$ with $a$ if $0 \le a < R$.} \[ d(\gs(\fraka, \frakb)) -
  d(\fraka) - d(\frakb) \qquad \text{and} \qquad d(\bs(\fraka)) - d(\fraka) \] for all $\fraka,
  \frakb \in \Red(K)$.

  One further fundamental property of these infrastructures is that computation of $d$ is hard,
  i.e. given $x \in X$, it is hard to compute the \emph{absolute distance}~$d(x)$ except for a few
  special values of $x$. Moreover, $R$ itself does not need to be known. This allows to do
  cryptography in infrastructures, as for doing cryptography, one must be able to efficiently
  compute certain objects (here: baby steps, giant steps and relative distances), while inverse
  computations (here: computing absolute distances) must be hard.

  \section{Obtaining cyclic groups from discrete cyclic infrastructures}
  \label{obtcg}

  Our aim is to embed a cyclic infrastructure into a one-dimensional torus and to describe
  arithmetic on the torus using the arithmetic of the infrastructure, i.e. by using giant and baby
  steps. More precisely, we embed the infrastructure into $\R/R\Z$ or $\Z/R\Z$ by adding the missing
  elements that are not in the infrastructure. One way to describe these missing elements are
  $f$-representations.

  In the number field case, another embedding and representation has been first described by
  H.~W.~Lenstra in \cite{lenstra-infrastructure}; a more general and more modern approach can be
  found in \cite{schoofArakelov}.

  Let $(X, d)$ be a cyclic infrastructure of circumference~$R$.

  \begin{definition}
    An \emph{$f$-representation} is a pair~$(x, f)$, where $x \in X$ and $f \in \left[0, R\right[$
    such that $[d(x), d(x) + f] \cap d(X) = \{ d(x) \}$. Denote the set of $f$-representations by
    $\fRep(X, d)$.

    If $(X, d)$ is discrete, define the subset \[ \fRepd(X, d) := \{ (x, f) \in \fRep(X, d) \mid f
    \in \Z \}. \]
  \end{definition}

  Note that infrastructures obtained from function fields, as described in Section~\ref{cycif}, are
  discrete. One can also obtain infrastructures from number fields of unit rank~one by a very
  similar method (for details, see \cite{buchmannwilliams-infrastructure}), but these are never
  discrete (see Section~\ref{phaibonf}).

  \begin{definition}
    Define the \emph{(absolute) distance} of a pair~$(x, f) \in X \times \R$ by \[ d(x, f) := d(x) +
    f \in \R/R\Z. \]
  \end{definition}

  Then we have the following proposition:

  \begin{proposition}
    The map \[ d|_{\fRep(X, d)} : \fRep(X, d) \to \R/R\Z, \qquad (x, f) \mapsto d(x, f) = d(x) +
    f \] gives a bijection between the set of $f$-representations and $\R/R\Z$. If $(X, d)$ is
    discrete, this restricts to a bijection \[ d|_{\fRepd(X, d)} : \fRepd(X, d) \to \Z/R\Z. \]
  \end{proposition}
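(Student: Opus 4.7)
The plan is to fix $r \in \R/R\Z$ and classify all $f$-representations $(x, f)$ with $d(x) + f = r$; the claim reduces to showing there is exactly one such representation. For this I introduce the auxiliary function $g_r : d(X) \to [0, R)$ sending $y \in d(X)$ to the unique representative in $[0, R)$ of $r - y \in \R/R\Z$. Since translation $y \mapsto r - y$ is a bijection of $\R/R\Z$ and the elements of $d(X)$ are distinct, $g_r$ is injective.

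The key reformulation is that $(x, f) \in \fRep(X, d)$ satisfies $d(x) + f = r$ if and only if $f = g_r(d(x))$ and $\{y \in d(X) \mid g_r(y) \le f\} = \{d(x)\}$. The first equivalence is immediate from the definition of $g_r$ together with the constraint $f \in [0, R)$. For the second, I would unwind the definition of $[\cdot,\cdot]$: picking a representative $\widehat{d(x)} \in \R$ and setting $\widehat{r} = \widehat{d(x)} + f$, membership $y \in [d(x), d(x)+f]$ is equivalent to $y$ having a representative in $[\widehat{d(x)}, \widehat{r}]$, which is in turn equivalent to $g_r(y) = \widehat{r} - \widehat{y} \in [0, f]$.

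Combining these, $(x, f)$ is an $f$-representation mapping to $r$ iff $d(x)$ is a minimizer of $g_r$ on $d(X)$ and $f = g_r(d(x))$. Since $d(X)$ is non-empty and finite and $g_r$ is injective, the minimum is attained at a unique point $y_0 \in d(X)$; letting $x_0 \in X$ be the unique preimage of $y_0$ under $d$, the pair $(x_0, g_r(y_0))$ is the unique $f$-representation mapping to $r$, proving bijectivity. For the discrete restriction, I would observe that if $R \in \Z$, $d(X) \subseteq \Z/R\Z$, and $r \in \Z/R\Z$, then $g_r$ takes integer values in $\{0, 1, \ldots, R-1\}$, so the unique $(x_0, f_0)$ automatically lies in $\fRepd(X, d)$; conversely every element of $\fRepd(X, d)$ obviously maps into $\Z/R\Z$.

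The main technical obstacle is the arc-versus-sub-level-set equivalence in the second paragraph. Once one commits to the representatives $\widehat{d(x)}$ and $\widehat{r} = \widehat{d(x)} + f$, this is a short calculation, but it relies crucially on $f < R$ so that the arc $[d(x), d(x)+f]$ does not wrap more than once around the circle and hence its preimage in $\R$ is a single honest interval of length $f$.
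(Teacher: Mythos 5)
Your proof is correct. The paper in fact states this proposition without any proof, so there is no argument of the author's to compare against step by step; the closest thing is the remark immediately following the proposition, which asserts that the unique $f$-representation with a given distance is the one with $f' \ge 0$ minimal and turns that into a baby-step algorithm. Your characterization of the fiber over $r$ via the injective function $g_r$ and its unique minimizer on the finite set $d(X)$ is exactly that observation made rigorous, and the one genuinely delicate point --- that $f < R$ forces the arc $[d(x), d(x)+f]$ to lift to a single interval of length $f$ in $\R$, so that membership in the arc is equivalent to $g_r(y) \le f$ --- is the point you correctly isolate and handle.
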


  \begin{remark}
    \label{remuniquecomputation}
    If $(x, f) \in X \times \R$ is arbitrary, there exists a \emph{unique} $f$-representation $(x',
    f')$ such that $d(x, f) = d(x', f')$. More precisely, it is the $f$-representation~$(x', f')$
    with $d(x, f) = d(x', f')$ such that $f' \ge 0$ is minimal.

    If $\abs{f}$ is small, $(x', f')$ can be computed efficiently using baby steps by starting with
    $(x, f)$ and minimizing $f$:
    \begin{enum1}
      \item While $f$ is negative, replace $(x, f)$ by $(\bs^{-1}(x), f + \Delta)$, where $\Delta :=
      d(x) - d(\bs^{-1}(x)) \in \left[0, R\right[$.
      \item Compute $x'' := \bs(x)$ and $\Delta' := d(x'') - d(x) \in \left[0, R\right[$.
      \item If $\Delta' > f$, then $(x, f)$ is an $f$-representation and we are done.
      \item Otherwise, replace $(x, f)$ by $(x'', f - \Delta')$ and continue with step~(2).
    \end{enum1}
    One quickly sees that all operations do not modify the distance~$d(x, f)$. In case $(X, d)$ is
    discrete, one needs at most $\abs{f}$ (inverse) baby step computations.
  \end{remark}

  Using this remark, we get the following proposition:

  \begin{proposition}
    If $(x, f)$ and $(x', f')$ are $f$-representations, consider the tuple \[ (\gs(x, x'), f + f' -
    (d(\gs(x, x')) - d(x) - d(x'))). \] By the previous remark, it corresponds to a unique
    $f$-representation~$(x'', f'')$. If we define \[ (x, f) \circ (x', f') := (x'', f''), \] we get
    that $(\fRep(X, d), \circ)$ is a group and \[ d|_{\fRep(X, d)} : (\fRep(X, d), \circ) \to
    (\R/R\Z, +) \] is a group isomorphism. If $(X, d)$ is discrete, we get that $(\fRepd(X, d),
    \circ)$ is a subgroup of $\fRep(X, d)$ and that \[ d|_{\fRepd(X, d)} : (\fRepd(X, d), \circ) \to
    (\Z/R\Z, +) \] is a group isomorphism. The relationships between
    these structures are described in the following diagram: \[ \xymatrix@R+0.5cm{ X \times \R
    \ar@{}[r]|{\supsetneqq\quad} \ar[d]_d & \fRep(X, d) \ar[d]^{d|_{\fRep(X, d)}}_{\cong}
    \ar@{}[r]|{\supsetneqq} & \fRepd(X, d) \ar[d]^{d|_{\fRepd(X, d)}}_{\cong} \\ \R/R\Z \ar@{=}[r] &
    \R/R\Z \ar@{}[r]|{\supsetneqq} & \Z/R\Z } \]
  \end{proposition}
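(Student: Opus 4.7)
The plan is to reduce everything to the bijection $d|_{\fRep(X, d)} : \fRep(X, d) \to \R/R\Z$ already established in the previous proposition, by showing that $\circ$ is exactly the binary operation obtained from $+$ on $\R/R\Z$ via this bijection. Once this is verified, the group axioms and the isomorphism statement are automatic (they are transported from $(\R/R\Z, +)$), and the discrete statement then only requires a closure check to show that $\fRepd(X, d)$ is a subgroup.

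The core calculation is the distance identity $d((x, f) \circ (x', f')) = d(x, f) + d(x', f')$ in $\R/R\Z$. Writing the candidate tuple as $(\gs(x, x'), g)$ with $g := f + f' - (d(\gs(x, x')) - d(x) - d(x'))$, a direct expansion gives
\[ d(\gs(x, x')) + g = d(\gs(x, x')) + f + f' - d(\gs(x, x')) + d(x) + d(x') = d(x) + d(x') + f + f' \]
in $\R/R\Z$, which is $d(x, f) + d(x', f')$. Since the normalization procedure in Remark~\ref{remuniquecomputation} only applies baby steps and inverse baby steps, each of which leaves $d(\cdot, \cdot)$ unchanged, the unique $f$-representation $(x'', f'')$ associated to $(\gs(x, x'), g)$ satisfies $d(x'', f'') = d(x, f) + d(x', f')$. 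Combined with the bijectivity of $d|_{\fRep(X, d)}$ from the previous proposition, this immediately yields that $(\fRep(X, d), \circ)$ is a group and that $d|_{\fRep(X, d)}$ is a group isomorphism onto $(\R/R\Z, +)$.

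For the discrete case, I would check closure of $\fRepd(X, d)$ under $\circ$. Assuming $(x, f), (x', f') \in \fRepd(X, d)$, so $f, f' \in \Z$, the correction term $d(\gs(x, x')) - d(x) - d(x'))$ lies in $\Z/R\Z$ since $d(X) \subseteq \Z/R\Z$; its smallest non-negative representative in $[0, R[$ is therefore an integer, so $g \in \Z$. In the normalization step, each replacement $(x, f) \mapsto (\bs^{\pm 1}(x), f \mp \Delta)$ uses $\Delta = d(\bs^{\pm 1}(x)) - d(x) \in \Z$ because of discreteness, so $f'' \in \Z$ as well. Hence $(x'', f'') \in \fRepd(X, d)$, which makes $\fRepd(X, d)$ a subgroup; the restriction of the isomorphism to this subgroup then lands exactly in $\Z/R\Z \subseteq \R/R\Z$ and is again a bijection by the previous proposition, giving the second isomorphism. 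The only real bookkeeping hazard I anticipate is keeping careful track of when quantities are interpreted in $\R$ versus in $\R/R\Z$, in particular when unfolding the correction term; everything else is a direct transport of structure.
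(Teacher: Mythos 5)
Your argument is correct and is essentially the intended one: the paper states this proposition \emph{without} an explicit proof, regarding it as immediate from the preceding remark and bijection, and your transport-of-structure computation $d((x,f)\circ(x',f'))=d(x,f)+d(x',f')$ (using that the normalization of Remark~\ref{remuniquecomputation} preserves distances) followed by pulling back $+$ along the bijection $d|_{\fRep(X,d)}$ is exactly what is implicit there. One small point: closure of $\fRepd(X,d)$ under $\circ$ yields a subgroup only because that set is nonempty and finite (or, more directly, because $\fRepd(X,d)$ is the preimage of the subgroup $\Z/R\Z$ under the isomorphism $d|_{\fRep(X,d)}$, the previous proposition having identified $d(\fRepd(X,d))$ with $\Z/R\Z$), so you should say one of these two things rather than appeal to closure alone.
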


  Therefore, if we are able to effectively compute $\bs$, $\bs^{-1}$ and $\gs$ and relative
  distances for an infrastructure~$(X, d)$, we can efficiently compute in a group isomorphic to
  $\R/R\Z$ or $\Z/R\Z$, even if $R$ is unknown and without the need to evaluate the function $d$ for
  general elements of $X$. More precisely:

  \begin{corollary}
    Let $(X, d)$ be an infrastructure such that $\bs$, $\bs^{-1}$ and $\gs$ are efficiently
    computable, together with the relative distances. Let $d_{\min} := \min\{ d(\bs(x)) - d(x) \mid
    x \in X \}$ and $d_{\max} := \max\{ d(\bs(x)) - d(x) \mid x \in X \}$. Then one group operation
    in $\fRep(X, d)$ can be computed using one $\gs$ computation and at most $\bigl\lceil\frac{2
    d_{\max}}{d_{\min}}\bigr\rceil$~computations of $\bs$ or at most
    $\bigl\lceil\frac{d_{\max}}{d_{\min}}\bigr\rceil$~computations of $\bs^{-1}$.
  \end{corollary}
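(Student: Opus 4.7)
The plan is to unfold the definition of the group operation from the preceding proposition and then apply the reduction algorithm of Remark~\ref{remuniquecomputation} to the preliminary tuple $(\gs(x, x'),\, f + f' - \delta)$, where $\delta := d(\gs(x, x')) - d(x) - d(x')$ is the relative distance supplied by the giant step. This accounts directly for the single $\gs$ evaluation in the statement; the remaining task is to count baby steps.

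First I would bound the initial $f$-value $F := f + f' - \delta$. Since $(x, f)$ is an $f$-representation, $f$ lies in $[0, d(\bs(x)) - d(x))$ and therefore $0 \le f < d_{\max}$, and the same holds for $f'$. By the defining property of $\gs(x, x')$, the point $d(x) + d(x')$ lies in the arc $[d(\bs^{-1}(\gs(x, x'))), d(\gs(x, x'))]$, whose length is a single baby step gap and hence at most $d_{\max}$; this gives $0 \le \delta \le d_{\max}$. Combining these bounds yields $-d_{\max} \le F < 2 d_{\max}$.

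Next I would split into cases according to the sign of $F$. If $F \ge 0$, step~(1) of the algorithm is skipped and only the forward loop (steps~(2)--(4)) runs; each executed forward step reduces $F$ by a gap $\Delta' \ge d_{\min}$ while keeping $F \ge 0$ (since step~(4) is only taken when $\Delta' \le F$), so from $F < 2 d_{\max}$ the loop terminates after at most $\bigl\lceil 2 d_{\max}/d_{\min}\bigr\rceil$ iterations. If instead $F < 0$, only step~(1) runs; each inverse step adds a gap $\ge d_{\min}$ to $F$, so from $F > -d_{\max}$ the loop terminates after at most $\bigl\lceil d_{\max}/d_{\min}\bigr\rceil$ iterations. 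I would then verify that no subsequent forward baby step is performed: at termination $F$ lies in $[0, \Delta)$ where $\Delta$ is the gap applied in the last inverse step, but $\Delta$ equals $d(\bs(\tilde{x})) - d(\tilde{x})$ at the current base~$\tilde{x}$, so the test $\Delta' > F$ in step~(3) succeeds immediately.

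The main obstacle is justifying $\delta \le d_{\max}$, which requires reading the giant step definition carefully to see that $d(x) + d(x')$ sits inside a single baby step gap of $d(X)$, together with the observation that the $F < 0$ branch needs no forward baby steps afterwards. Everything else is routine bookkeeping on the reduction algorithm of Remark~\ref{remuniquecomputation}.
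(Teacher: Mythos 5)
Your proposal is correct and follows essentially the same route as the paper: one giant step produces the preliminary tuple with $f$-value $F = f + f' - \delta$, the bounds $0 \le f, f' < d_{\max}$ and $0 \le \delta \le d_{\max}$ give $-d_{\max} \le F < 2d_{\max}$, and since each (inverse) baby step changes $F$ by at least $d_{\min}$, the stated ceilings follow. The extra details you supply (the justification of $\delta \le d_{\max}$ via the arc ending at $\gs(x,x')$, and the check that no forward step follows the inverse-step loop) are correct refinements of what the paper states more tersely.
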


  \begin{proof}
    Given $(x, f), (x', f') \in \fRep(X, f)$, one first computes~$(x'', f'')$ by $x'' := \gs(x, x')$
    and $f'' := f + f' + ( d(x) + d(x') - d(x'') )$; then $d(x'', f'') = d(x, f) + d(x', f')$. As,
    by definition of the giant step function, $-d_{\max} < d(x) + d(x') - d(x'') \le 0$, we have
    $-d_{\max} < f'' < 2 d_{\max}$. When replacing $(x'', f'')$ by $(\bs(x''), f'' - (d(\bs(x'')) -
    d(x'')))$ resp. $(\bs^{-1}(x''), f'' + (d(x'') - d(\bs^{-1}(x''))))$, we have that $f''$
    decreases resp. increases at least by $d_{\min}$. Hence, we can do at most $\bigl\lceil \frac{2
    d_{\max}}{d_{\min}} \bigr\rceil$ baby steps resp. $\bigl\lceil \frac{d_{\max}}{d_{\min}}
    \bigr\rceil$ inverse baby steps before $f''$ gets negative resp. positive.
  \end{proof}

  If $(X, d)$ is a discrete infrastructure, $d_{\min} \ge 1$. If $(X, d)$ is obtained from a
  function field as described at the end of Section~\ref{cycif}, it is an easy application of the
  Riemann-Roch Theorem \cite[p.~28, Theorem~I.5.15]{stichtenoth} to see that $d_{\max} \le \ceil{
  \frac{g + \deg \frakp_2}{\deg \frakp_1} }$. This also shows that one should order $\frakp_1$ and
  $\frakp_2$ such that $\deg \frakp_1 \ge \deg \frakp_2$. Note that this result is also valid if
  $\deg \frakp_i > 1$ for both~$i$.

  \begin{remark}
    In case we want to compute in $\fRep(X, d)$ (which is, for example, necessary if $(X, d)$ is not
    discrete), we need to work with (arbitrary) real numbers. As this is not possible on computers,
    one needs to approximate them using floating point numbers. More details on this can be found in
    \cite{huehnlein-paulus} and \cite{JSW-RQFKE}; there, such representations are called
    CRIAD-representations resp. $(f, p)$-representations.
  \end{remark}

  Finally, we want to note that in the case of real hyperelliptic function fields, a similar
  representation has been used by S.~Paulus and H.-G.~R\"uck in \cite{paulus-rueck} to describe the
  arithmetic in the Jacobian. This, together with the discussion in \cite{JSS-FAoHCvCFE}, shows that
  in this case, our group~$\Z/R\Z$ is in fact the subgroup of the Jacobian which is generated by the
  divisor class of $\frakp_1 - \frakp_2$. This is also true for non-hyperelliptic function fields
  under the assumption that $\deg \frakp_1 = \deg \frakp_2 = 1$.

  \section{Pohlig-Hellman in groups}
  \label{phexplained}

  Before explaining how to do Pohlig-Hellman in discrete
  infrastructures in Section~\ref{phidi}, we want to recall the
  Pohlig-Hellman method for finite cyclic groups.

  Assume that we have a finite cyclic group~$G = \ggen{g}$ of order~$m$ and an element~$h \in G$. We
  can consider the \emph{discrete logarithm problem}, which states that one wants to find some~$n
  \in \N$ with \[ g^n = h. \] Note that $n$ is unique modulo~$m$. Assume that the prime
  factorization \[ m = \prod_{i=1}^t p_i^{e_i} \] with distinct primes~$p_1, \dots, p_t$ and
  positive integers~$e_1, \dots, e_t \in \N_{>0}$ is known. To compute $n$, note that by the Chinese
  Remainder Theorem, \[ G \cong \Z/m\Z \cong \Z/p_1^{e_1}\Z \times \dots \times \Z/p_t^{e_t}\Z. \]
  Therefore, we can compute~$n$ modulo~$p_i^{e_i}$ for every~$i$ and deploy the Chinese Remainder
  Theorem to recover~$n$ modulo~$m$.

  To compute~$n \modulo p_i^{e_i}$, we successively compute~$n \modulo p_i^\ell$, $\ell = 1, \dots,
  e_i$, by considering the discrete logarithm problem \[ \left( g^{m / p_i^\ell} \right)^{n \modulo
  p_i^\ell} = h^{m / p_i^\ell}, \] where $n \modulo p_i^\ell$ is sought. As we assume that we
  already know $n \modulo p_i^{\ell-1}$, we only have to solve the discrete logarithm problem
  \begin{equation}\label{eq:phingroups1}
    \left( g^{m / p_i} \right)^{n'} = h^{m / p_i^\ell} g^{-m / p_i^\ell \; \cdot \; (n
    \modulo p_i^{\ell-1})},
  \end{equation}
  where $n' \in \{ 0, \dots, p_i - 1 \}$, to obtain \[ n \modulo p_i^\ell = (n \modulo p_i^{\ell-1})
  + n' p_i^{\ell-1}. \] Assuming that we are using a method for solving discrete logarithms for
  elements of prime order~$p$ which needs $\calO(\sqrt{p})$~group operations (according to
  \cite{shoup-dlp-lowerbound}, this is optimal if one assumes that $G$
  behaves like a generic group), the running time of Pohlig-Hellman is \[
  \calO\Bigl( \sum_{i=1}^t e_i \max\{ \sqrt{p_i}, \log m \} \Bigr) = \calO\bigl( t
  \max_{i=1,\dots,t} e_i \max\{ \sqrt{p_i}, \log m \} \bigr) \] group operations.

  \section{Pohlig-Hellman in discrete infrastructures}
  \label{phidi}

  Assume that $(X, d)$ is a discrete infrastructure of circumference~$R \in \Z$. We have seen that
  this gives rise to a finite set~$\fRepd(X, d)$ of $R$~elements, which can be equipped with the
  structure of a cyclic group. In the following, we will write this group \emph{additively},
  i.e. the group operation will be $+$ and instead of exponentiation, we will use scalar
  multiplication.

  We further assume that $R$ together with an element $(x, f) \in \fRepd(X, d)$ is known where $d(x,
  f) = d(x) + f$ is known and small.\footnote{We call an element~$r \in \R/R\Z$ \emph{small} if we
  can write $r = \hat{r} + R\Z$ with $\hat{r}$ small.} Using baby steps and inverse baby steps, we
  can compute an $f$-representation~$(x', f')$ with $d(x', f') = 1$ from this (compare
  Remark~\ref{remuniquecomputation}). Then we have $\fRepd(X, d) = \ggen{(x', f')}$.

  In the group~$(\fRepd(X, d), +)$ we can consider the \emph{discrete logarithm problem} \[ n \cdot
  (x', f') = (x'', f''), \] where $(x'', f'') \in \fRepd(X, d)$ and $n \in \Z$. In particular, as
  $d(x', f') = 1$, we have that $d(x'', f'') = n + R \Z$, whence solving the discrete logarithm
  problem for an element in $X$ is \emph{equivalent} to computing a distance of an element in $X$.

  As we can effectively compute the group operation in $\fRepd(X, d)$, we can employ any algorithm
  for computing discrete logarithms in groups to find $n$ and, in particular, as we know the group
  order, we can employ the Pohlig-Hellman algorithm.

  Assume that the prime factorization \[ R = \prod_{i=1}^t p_i^{e_i} \] with distinct primes~$p_1,
  \dots, p_t$ and positive integers~$e_1, \dots, e_t \in \N_{>0}$ is known. We have seen in
  Section~\ref{phexplained} that in order to find~$n$, we need to solve the discrete logarithm problems
  \begin{equation}\label{eq:phininfras1}
    n' \cdot \left( \frac{R}{p_i} \cdot (x', f') \right) = \frac{R}{p_i^\ell} \cdot (x'', f'') - (n
    \modulo p_i^{\ell-1}) \cdot \left( \frac{R}{p_i^\ell} \cdot (x', f') \right)
  \end{equation}
  for $n'$ for $i = 1, \dots, t$ and $\ell = 1, \dots, e_i$; this is Equation~(\ref{eq:phingroups1})
  transcribed to our setting. Note that we know that the order of $\frac{R}{p_i^\ell} \cdot (x',
  f')$ divides~$p_i^\ell$, whence we have that $-\frac{R}{p_i^\ell} \cdot (x', f') = (p_i^\ell - 1)
  \cdot \frac{R}{p_i^\ell} \cdot (x', f')$. In particular, there is no need to compute inverses, if
  one rewrites Equation~(\ref{eq:phininfras1}) as
  \begin{equation*}
    n' \cdot \left( \frac{R}{p_i} \cdot (x', f') \right) = \frac{R}{p_i^\ell} \cdot (x'', f'') + (n
    \modulo p_i^{\ell-1}) \cdot (p_i^\ell - 1) \cdot \left( \frac{R}{p_i^\ell} \cdot (x', f')
    \right).
  \end{equation*}
  As in Section~\ref{phexplained}, we get that the running time of Pohlig-Hellman is \[ \calO\Bigl(
  \sum_{i=1}^t e_i \max\{ \sqrt{p_i}, \log R \} \Bigr) = \calO\bigl( t \max_{i=1,\dots,t} e_i \max\{
  \sqrt{p_i}, \log R \} \bigr) \] group operations in $\fRepd(X, d)$.

  \begin{remarks}\hfill
    \begin{enuma}
      \item The Pohlig-Hellman method can be parallelized: for computing~$n \modulo p_i^{e_i}$,
      there is no knowledge required of $n \modulo p_j^{e_j}$ for any~$j \neq i$. This reduces the
      running time to \[ \calO\bigl( \max_{i=1,\dots,t} e_i \max\{ \sqrt{p_i}, \log R \} \bigr) \]
      group operations in $\fRepd(X, d)$ when using $t$~processors.
      \item Note that it suffices to know an \emph{integer multiple}~$R'$ of the circumference~$R$
      and the factorization \[ R' = \prod_{i=1}^{t'}
      p_i^{\hat{e}_i} \] with $t' \ge t$. In this case, we have
      $\hat{e}_i \ge e_i$ for each~$i \le t$. If one applies Pohlig-Hellman with $R'$ instead of $R$,
      i.e. by replacing the $e_i$'s by the $\hat{e}_i$'s, the algorithm will return the same value
      of~$n$, as for $\ell > e_i$, the solution of Equation~(\ref{eq:phininfras1}) will be $n' =
      0$. The only disadvantage is that the running time will increase to \[ \calO\bigl( t'
      \max_{i=1,\dots,t} \hat{e}_i \max\{ \sqrt{p_i}, \log R \} \bigr) \] group operations.

      An alternative is to first try to find the $e_i$'s from the $\hat{e}_i$'s. For that, one
      computes $\frac{R'}{p_i} \cdot (x', f')$ for each~$i$; if this equals the identity in
      $\fRepd(X, d)$, we have $\hat{e}_i > e_i$. In that case, we can decrease~$\hat{e}_i$ by one,
      i.e. replace $R'$ by $\frac{R'}{p_i}$, and try again.
      \item One can also deploy the Pohlig-Hellman method if $d(x', f') \neq 1$. In case no $n'$ is
      found for one instance of Equation~(\ref{eq:phininfras1}), we have $(x'', f'') \not\in
      \ggen{(x', f')}$.

      If we have $d(x', f') \neq 1$, it is enough to know an integer multiple of $\frac{R}{\gcd(\ell,
      R)}$, where $\ell \in \Z$ is any integer with $\ell + R \Z = d(x', f')$, as $\frac{R}{\gcd(\ell, R)}$
      is the order of $(x', f')$ in $\fRepd(X, d)$.
    \end{enuma}
  \end{remarks}

  \section{Testing for smooth circumference}
  \label{tfsc}

  With respect to the result from last section, it is desirable to check whether a given discrete
  cyclic infrastructure~$(X, d)$ with circumference~$R$ satisfies that $R$ is $B$-smooth, i.e. that
  all prime divisors of $R$ are $\le B$, for some integer~$B \in \N$. In practice, in particular
  when using a discrete cyclic infrastructure for cryptographic reasons, it can happen that $R$ is
  not known. In this section, we present an algorithm which still allows to check whether $R$ is
  $B$-smooth. (Also see the discussion following Question~1 in Section~\ref{conclsn}, where the
  smoothness of the regulator of a randomly chosen function field is discussed.)

  For this, we make the following requirements:
  \begin{enum1}
    \item we know some $(x, f) \in \fRepd(X, d)$ with $d(x, f) = 1$;
    \item we know an upper bound~$R'$ for $R$;
    \item for every~$(x', f') \in \fRepd(X, d)$, we can efficiently check whether $d(x', f') = 0$,
    i.e. whether $(x', f')$ is the identity in $\fRepd(X, d)$.
  \end{enum1}
  For discrete cyclic infrastructures obtained from unit rank~one function fields as described in
  Section~\ref{cycif}, these requirements are always satisfied. Assume that $K$ is such a function
  field with full field of constants~$\F_q$ and genus~$g$. Then we have that:
  \begin{enum1}
    \item either $(x, f) = (\calO, 1)$ or $(x, f) = (\bs(\calO), 0)$ is an $f$-representation with
    $d(x, f) = 1$;
    \item an explicit upper bound for $R$ can be given using
      Hasse-Weil, as shown below; and
    \item $d(x', f') = 0$ if, and only if, $x' = \calO$ and $f' = 0$.
  \end{enum1}
  Both (1) and (3) follow from the fact that $d(\calO) = 0$ and from the definition of
  $f$-representations. For (2), let $d = \gcd(\deg \frakp_1, \deg \frakp_2)$ and $D := \frac{\deg
  \frakp_2}{d} \frakp_1 - \frac{\deg \frakp_1}{d} \frakp_2 \in \Div(K)$. Then, for $n =
  \bigl|\Pic^0_{\F_q}(K)\bigr|$, the divisor~$n D$ is principal. Now, by Hasse-Weil, $n \le (1 +
  \sqrt{q})^{2 g}$ \cite[p.~287, Corollary~6.3 and Remark~6.4]{lorenzini}. As $n D \neq 0$ must be
  the divisor of a non-constant unit~$\varepsilon$ of $\calO$, we obtain \[ R \le
  \abs{\nu_1(\varepsilon)} \le \frac{\deg \frakp_2}{d} (1 + \sqrt{q})^{2 g}. \] Note that this bound
  is rather crude; for example, for real quadratic function fields of Richaud-Degert type, the
  regulator is very small.

  Our method is formulated in the following lemma:

  \begin{lemma}
    Let $p_1, \dots, p_t$ be all primes $\le B$, and define \[ m := \prod_{i=1}^t p_i^{\floor{
    \frac{\log R'}{\log p_i} }}, \] where $R'$ satisfies $R' \ge R$. Then $R$ is $B$-smooth if, and
    only if, $d(m \cdot (x, f)) = 0$.
  \end{lemma}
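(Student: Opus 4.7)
The plan is to reduce the statement to a divisibility condition in $\Z/R\Z$ and then verify it by comparing prime-power exponents. First, I would use the group isomorphism $d|_{\fRepd(X,d)} \colon (\fRepd(X, d), \circ) \to (\Z/R\Z, +)$ established earlier. Since $(x,f)$ satisfies $d(x,f) = 1$, it is a generator, and the homomorphism sends $m \cdot (x,f)$ to $m \cdot 1 = m + R\Z$. Hence
\[
d(m \cdot (x, f)) = 0 \iff m \equiv 0 \pmod{R} \iff R \mid m.
\]
So the lemma reduces to the purely number-theoretic claim: $R$ is $B$-smooth if and only if $R \mid m$.

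For the ``if'' direction, assume $R \mid m$. Then every prime divisor of $R$ divides $m = \prod_{i=1}^t p_i^{\lfloor \log R'/\log p_i\rfloor}$, and hence equals some $p_i \le B$, so $R$ is $B$-smooth.

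For the ``only if'' direction, assume $R$ is $B$-smooth, and write $R = \prod_{i=1}^t p_i^{a_i}$ with $a_i \ge 0$ (allowing $a_i = 0$ when $p_i \nmid R$). Since $p_i^{a_i} \le R \le R'$, taking logarithms yields $a_i \le \log R'/\log p_i$, and as $a_i \in \Z$ we obtain $a_i \le \lfloor \log R'/\log p_i \rfloor$. Therefore $R \mid m$.

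There is essentially no obstacle here: once the group-theoretic observation $d(m \cdot (x,f)) = m \bmod R$ is made explicit (which is just the content of the isomorphism $\fRepd(X,d) \cong \Z/R\Z$ together with $d(x,f) = 1$), the remainder is an elementary comparison of prime-power factorizations, with the upper bound $R \le R'$ used precisely to control each exponent $a_i$.
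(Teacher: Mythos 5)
Your proposal is correct and follows essentially the same two-step route as the paper: reduce $d(m \cdot (x,f)) = 0$ to $R \mid m$ via the isomorphism $\fRepd(X,d) \cong \Z/R\Z$ with generator $(x,f)$, and then observe that $B$-smoothness of $R$ is equivalent to $R \mid m$ because $R \le R'$ bounds each prime-power exponent. You merely spell out the exponent comparison that the paper leaves implicit.
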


  \begin{proof}
    Firstly, note that $R$ is $B$-smooth if, and only if, $R \mid m$, as $R \le R'$. Secondly, the
    cyclic group~$\fRepd(X, d)$ is generated by $(x, f)$ and has order~$R$, whence $m \cdot (x, f)$
    is the identity (i.e. has distance~$0$) if, and only if, $m$ is an integer multiple of
    $R$.
  \end{proof}

  Note that our method is very similar to the computations done in J.~Pollard's $(p - 1)$-method
  \cite[p.~93, Algorithm~3.14]{HOAC} or in H.~W.~Lenstra's Elliptic Curve Method for Factorization
  \cite{lenstra-fact}.

  \begin{remark}
    To evaluate $m \cdot (x, f)$, one can proceed iteratively, as it is usually done in Pollard's
    $(p - 1)$-method and in Lenstra's Elliptic Curve Method:

    Define $(x_0, f_0) := (x, f)$ and \[ (x_i, f_i) := p_i^{\floor{ \frac{\log R'}{\log p_i} }}
    (x_{i-1}, f_{i-1}), \qquad 1 \le i \le t. \] Then $m \cdot (x, f) = (x_t, f_t)$. To compute
    $(x_i, f_i)$ from $(x_{i-1}, f_{i-1})$, one does $\bigl\lfloor \frac{\log R'}{\log p_i} \bigr\rfloor$
    consecutive multiplications of $(x_{i-1}, f_{i-1})$ by $p_i$.

    Therefore, to compute~$m \cdot (x, f)$ using this method, one needs \[ \calO\Bigl( t \floor{
    \tfrac{\log R'}{\log p_i} } \log p_i \Bigr) = \calO( t \log R' ) \] group operations in
    $\fRep(X, d)$, assuming a square-and-multiply technique is used for multiplication by $p_i$.
  \end{remark}

  In the case of infrastructures obtained from function fields, we get:

  \begin{corollary}
    If $(X, d)$ is a discrete infrastructure of circumference~$R$ obtained from a function field (as
    in Section~\ref{cycif}) of genus~$g$ with full field of constants~$\F_q$, then one needs at most
    $\calO( t g \log q )$ giant step and $\calO( t g^2 \log q )$ baby step computations to check
    whether $R$ is $p_t$-smooth, where $p_t$ is the $t$-th prime number.
  \end{corollary}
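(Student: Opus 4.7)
The plan is to combine the lemma, the remark following it, and the Hasse--Weil bound established earlier in this section. By the lemma, $R$ is $p_t$-smooth if and only if $d(m \cdot (x,f)) = 0$, where $(x,f)$ is the $f$-representation of distance~$1$ provided in item~(1) of the preamble and $m = \prod_{i=1}^t p_i^{\lfloor \log R'/\log p_i \rfloor}$. The identity test on the right-hand side is free of additional cost by item~(3), which in our function field setting amounts to checking $x' = \calO$ and $f' = 0$.

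First I would bound $\log R'$ via Hasse--Weil: taking $R' := \lceil \frac{\deg \frakp_2}{d}(1+\sqrt{q})^{2g}\rceil$ as derived above gives $\log R' = \calO(g \log q)$, since $\log(1+\sqrt{q}) = \calO(\log q)$ and the factor $\frac{\deg \frakp_2}{d}$ contributes only a lower-order term. By the remark following the lemma, the iterative evaluation of $m \cdot (x,f)$ then costs $\calO(t \log R') = \calO(t g \log q)$ group operations in $\fRepd(X,d)$.

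Next I would expand each such group operation using the corollary at the end of Section~\ref{obtcg}: one operation consists of a single giant step together with at most $\lceil 2 d_{\max}/d_{\min}\rceil$ baby steps. Since $(X,d)$ is discrete, $d_{\min} \ge 1$; and the Riemann--Roch estimate quoted after that corollary gives $d_{\max} \le \lceil (g + \deg \frakp_2)/\deg \frakp_1\rceil$. The assumption that at least one infinite place has degree one, combined with the convention $\deg \frakp_1 \ge \deg \frakp_2$, forces $\deg \frakp_2 = 1$, so $d_{\max} = \calO(g)$. Hence each group operation costs one giant step and $\calO(g)$ baby steps.

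Multiplying the two counts together yields the claimed $\calO(t g \log q)$ giant step and $\calO(t g^2 \log q)$ baby step complexities. The argument is essentially a chain of substitutions, and the only point requiring care is keeping the labelling of the infinite places consistent: both the Hasse--Weil bound on $R$ and the Riemann--Roch bound on $d_{\max}$ must be invoked under the convention $\deg \frakp_1 \ge \deg \frakp_2 = 1$, otherwise the estimate for $d_{\max}$ would not collapse to $\calO(g)$.
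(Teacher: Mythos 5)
Your argument is correct and is exactly the chain the paper intends (the corollary is stated without an explicit proof, but the ingredients --- the Hasse--Weil bound $\log R' = \calO(g\log q)$, the $\calO(t\log R')$ group-operation count from the preceding remark, and the one-giant-step-plus-$\calO(g)$-baby-steps cost per group operation from the corollary in Section~\ref{obtcg}): nothing is missing. Your observation about keeping the ordering $\deg\frakp_1 \ge \deg\frakp_2 = 1$ consistent between the two bounds is a fair and correctly handled point of care.
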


  \section{Pohlig-Hellman and infrastructures based on number fields}
  \label{phaibonf}

  In the case that $K$ is a number field of unit rank~one, i.e. with two places at infinity, one can
  construct a cyclic infrastructure basically the same way as for function fields with two places at
  infinity. This is, for example, described in \cite{buchmannwilliams-infrastructure}. In the number
  field case, $\calO$ is the integral closure of $\Z$ in $K$, and $\F_q^*$ is replaced by the
  roots of unity in $K$. The places at infinity correspond to the (non-conjugate) embeddings~$K
  \to \C$; if the two embeddings are~$\sigma_1$ and $\sigma_2$, the condition that $\deg \sigma_i =
  1$ for one $i$ corresponds to $\sigma_i(K) \subseteq \R$. Moreover, the valuations~$\nu_i$ are
  defined by $\nu_i(x) := -\log \abs{\sigma_i(x)}$, $x \in K^*$. Let $(X, d)$ be the resulting
  infrastructure.

  Note that if $\alpha \in K^*$, then $\abs{\sigma_i(\alpha)}$ is algebraic over $\Q$ and, hence,
  $\nu_i(\alpha)$ is transcendental over $\Q$ by Lindemann's Theorem if $\nu_i(\alpha) \neq
  0$. Therefore, in particular, neither $R$ nor any element of $d(X)$, except 0, is a rational
  number, whence~$(X, d)$ is far from being discrete.

  Let $x \in \calO \setminus \{ 0 \}$. We want to investigate when $\frac{\nu_1(x)}{R} \in \Q$
  happens. If $R = \nu_1(\varepsilon)$ for $\varepsilon \in \calO^*$, we have that
  $\frac{\nu_1(x)}{R} = \frac{p}{q}$ with $p, q \in \Z \setminus \{ 0 \}$ implies $\abs{\sigma_1(x^q
  / \varepsilon^p)} = 1$. By \cite[p.~285, (8)]{appelgateonishi}, we must have $\abs{\sigma_2(x^q /
  \varepsilon^p)} = 1$. Now, if $\calO \frac{1}{x}$ is reduced, this implies that
  $\frac{x^q}{\varepsilon^p}$ is a root of unity, i.e. is equal to $\pm 1$, i.e. we have that $x^q =
  \pm \varepsilon^p$. But then, we have \[ N_{K/\Q}(x)^q = N_{K/\Q}(x^q) = N_{K/\Q}(\pm
  \varepsilon^p) = \pm 1, \] whence our assumption~$x \in \calO$ implies that $x \in \calO^*$. This
  is the main ingredient of the following result:

  \begin{proposition}
    If $\fraka \in \Red(K)$, then $(\fraka, 0)$ has finite order in $\fRep(X, d)$ if, and only if,
    $\fraka = \calO$.
  \end{proposition}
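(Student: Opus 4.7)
The plan is to translate the statement across the group isomorphism $d|_{\fRep(X, d)} : \fRep(X, d) \to \R/R\Z$ established in Section~\ref{obtcg}, and then invoke the norm computation carried out in the discussion immediately preceding the proposition.

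For the ``if'' direction, writing $\calO = \calO \cdot \frac{1}{1}$ gives $d(\calO) = -\nu_1(1) + R\Z = 0$. Since $(\calO, 0)$ is automatically an $f$-representation, the isomorphism sends it to $0 \in \R/R\Z$, so $(\calO, 0)$ is the identity of $\fRep(X, d)$ and trivially has finite order.

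For the ``only if'' direction, I would first use that $1 \in \fraka$ (by reducedness) to write $\fraka = \calO \cdot \frac{1}{x}$ with some $x \in \calO \setminus \{0\}$, so that $d(\fraka, 0) = -\nu_1(x) + R\Z$. Under the isomorphism, $(\fraka, 0)$ has finite order in $\fRep(X, d)$ if and only if $-\nu_1(x) + R\Z$ has finite order in $\R/R\Z$, which is equivalent to $\nu_1(x) / R \in \Q$. I would then split into two cases. If $\nu_1(x) = 0$, then $d(\fraka) = 0 = d(\calO)$, and injectivity of $d$ on $\Red(K)$ forces $\fraka = \calO$. Otherwise, $\nu_1(x)/R = p/q$ with $p, q \in \Z \setminus \{0\}$, and the computation preceding the proposition (the Appelgate--Onishi identity giving $\abs{\sigma_2(x^q/\varepsilon^p)} = 1$, reducedness of $\fraka$ forcing $x^q = \pm\varepsilon^p$, and the resulting norm equation $N_{K/\Q}(x)^q = \pm 1$) yields $x \in \calO^*$, whence $\fraka = \calO \cdot \frac{1}{x} = \calO$.

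Since essentially all of the number-theoretic content has already been packaged in the paragraph before the proposition, there is no real obstacle: the proof is a short assembly of the group isomorphism from Section~\ref{obtcg}, the injectivity of $d|_{\Red(K)}$, and the preceding norm argument. The only point requiring a modicum of care is separating the $\nu_1(x) = 0$ case from the preceding discussion, which tacitly assumes $p, q \in \Z \setminus \{0\}$.
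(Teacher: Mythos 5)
Your proof is correct and follows essentially the same route as the paper: both directions reduce via the isomorphism $d|_{\fRep(X,d)}$ to the condition $\nu_1(x)/R \in \Q$ and then invoke the norm computation in the paragraph preceding the proposition. Your explicit separation of the case $\nu_1(x) = 0$ (handled by injectivity of $d$ on $\Red(K)$) is a small refinement the paper glosses over, since its preceding discussion indeed assumes $p, q \in \Z \setminus \{0\}$.
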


  \begin{proof}
    If $\fraka = \calO$, then $(\fraka, 0)$ is the identity of $\fRep(X, d)$. For the other
    direction, write $\fraka = \calO \frac{1}{x}$ with $x \in \calO$. As $d : \fRep(X, d) \to
    \R/R\Z$, $(\frakb, f) \mapsto d(\frakb) + f = -\nu_1(x) + f + R \Z$ is an isomorphism, $(\fraka,
    0)$ having finite order means that $\frac{-\nu_1(x)}{R} \in \Q$. By the discussion before the
    lemma, this implies $x \in \calO^*$, whence $\fraka = \calO \frac{1}{x} = \calO$.
  \end{proof}

  As the Pohlig-Hellman method (or any other of the standard discrete logarithm problem solvers)
  requires an element of finite order (of which an integer multiple has to be known in the case of
  the Pohlig-Hellman method), we cannot directly apply the Pohlig-Hellman method to $(\fraka, 0) \in
  \fRep(X, d)$, but have to find a positive real number~$f \in \R$ and an
  $f$-representation~$(\frakb, f') \in \fRep(X, d)$ with $d(\fraka) + f = d(\frakb) + f'$ such that
  $(\frakb, f')$ has finite order in $\fRep(X, d)$.

  This of course opens the question how to find such an~$f$, if one does not already
  know~$d(\fraka)$ and $R$. Obviously, if one knows $d(\fraka)$ in advance, there is no need to
  apply the Pohlig-Hellman method to compute~$d(\fraka)$. Hence, one has to find~$f$ without this
  information, or one has to adjust the Pohlig-Hellman method to circumvent this problem.

  Finally, as Lenstra used a different distance function in the case that $K$ is a real quadratic
  number field \cite{lenstra-infrastructure}, we want to investigate whether with his distance
  function, a Pohlig-Hellman variant is possible. Let $K$ be a real quadratic number field and let
  $\sigma$ be the unique non-trivial automorphism of $K$. Assume that $\calO \frac{1}{x} \in
  \Red(K)$.

  Instead of using the distance $-\nu_1(x) + R \Z$, Lenstra uses $\tfrac{1}{2} (\nu_2(x) - \nu_1(x))
  + R \Z = \tfrac{1}{2} \log \abs{\frac{\sigma_1(x)}{\sigma_2(x)}} + R \Z$. Then, $\tfrac{\nu_2(x) -
  \nu_1(x)}{2 R} = \tfrac{p}{2 q} \in \Q$ is equivalent to
  $\frac{\sigma_1(x^q)}{\sigma_1(\sigma(x)^q)} = \frac{\sigma_1(x^q)}{\sigma_2(x^q)} = \pm
  \sigma_1(\varepsilon^{-p})$ for $p, q \in \Z$, $q \neq 0$. Now $\sigma(x^q) = \pm x^q \varepsilon^{-p}$
  means that if $\frakp$ is a finite place of $K$, then $q \nu_\frakp(x) = q
  \nu_{\sigma(\frakp)}(x)$. But this means that $\frac{\sigma(x)}{x} \in \calO^*$, which implies
  $\varepsilon^{-p/q} \in \calO^*$. Since $\calO^* = \ggen{-1, \varepsilon}$, it follows that
  $\frac{p}{q} \in \Z$. Without loss of generality, assume~$q = 1$, i.e. we have $\frac{\nu_2(x) -
  \nu_1(x)}{2 R} = \frac{p}{2}$. Hence, $\frac{\nu_2(x) - \nu_1(x)}{2} + R \Z$ can attain at most
  the values $0 + R \Z$ and $\frac{R}{2} + R \Z$ in $\R / R\Z$, whence by \cite[p.~15,
  Section~10]{lenstra-infrastructure} there are at most two $f$-representations~$(\calO \frac{1}{x},
  0) \in \fRep(X, d)$ of finite order, and the possible orders are one or two.

  By this argumentation, we have the same problems implementing Pohlig-Hellman using this distance
  function as in the case of the other distance function.

  \section{Conclusion}
  \label{conclsn}

  There exist several cryptosystems employing discrete cyclic infrastructures; for examples of such,
  see \cite{biehl-buchmann-thiel, scheidler-buchmann-williams-qrkeyexchange, SSW-KEiRQCFF,
  JSW-RQFKE, JSW-RQFKE2, JSS-CPoRHC}. They are all based on the hardness of computing distances: if
  $(X, d)$ is a cyclic infrastructure, these systems require that it is hard to compute~$d(x)$ for a
  general~$x \in X$. Note that this is equivalent to computing $d(x, 0)$ for $(x, 0) \in \fRep(X,
  d)$.

  Now assume that $(X, d)$ is discrete with circumference~$R$, and that an integer multiple~$R'$ of
  $R$ is known. Moreover, assume that $R'$ is smooth, i.e. that $R'$ factors with relatively small
  prime factors. If baby, inverse baby and giant steps and relative distances can be computed
  efficiently, we can use the Pohlig-Hellman method described in Section~\ref{phidi} to compute
  $d(x, 0)$ relatively fast.

  Hence, in order for cryptosystems which are based on the hardness of computing absolute distances in
  discrete cyclic infrastructures to be safe, one has to use discrete infrastructures such that
  \begin{enuma}
    \item either it is very hard to compute a multiple~$R'$ of $R$ which can be factorized,
    \item or $R$ is not smooth, i.e. has at least one very large prime factor.
  \end{enuma}

  In (a), it may even be enough that only a part of $R'$ can be factorized, if this part is still a
  multiple of $R$: assume that $R'$ factors as $R_1 R_2$, where $R_1$ is smooth, i.e. has small
  prime factors, but $R_2$ has only very large prime factors. Then it still might be that $R_2$ is
  not needed for computing~$d(x)$: one computes $R_1 \cdot (\calO, 1)$, and if it equals $(\calO,
  0)$, one can take $R_1$ instead of $R$. If one knows that $R$ is smooth, $R$ will be a divisor of
  $R_1$ and we will have $R_1 \cdot (\calO, 1) = (\calO, 0)$.

  To avoid the possibility that the Pohlig-Hellman method can be used, one has to use function
  fields whose regulator is not $B$-smooth for a ``large enough''~$B$. A na\"\i ve way to find such
  function fields is to randomly pick a function field (with two places at infinity, one of them of
  degree~one) and to apply the algorithm from Section~\ref{tfsc} to check whether the regulator is
  $B$-smooth; this procedure is repeated until a sufficient curve is found.

  This leads to several important questions:
  \begin{enum1}
    \item How smooth is the regulator of an average function field with two places at infinity, one
    of them of degree one?
  \end{enum1}
  In \cite[Section~6.1]{SSW-KEiRQCFF}, R.~Scheidler, A.~Stein and H.~C.~Williams apply the heuristic
  arguments of H.~Cohen and H.~W.~Lenstra to real hyperelliptic function fields. They reason that
  the odd part of the ideal class group~$\Pic(\calO)$ of $\calO$ in the real hyperelliptic function
  field case is small with high probability. As $R \cdot \abs{\Pic(\calO)} =
  \bigl|\Pic^0_{\F_q}(K)\bigr|$, and $\bigl|\Pic^0_{\F_q}(K)\bigr| \in [(\sqrt{q} - 1)^{2 g},
  (\sqrt{q} + 1)^{2 g}]$, this shows that $R$ is large with high probability. More importantly, the
  smoothness of $R$ is more or less equivalent to the smoothness of $\bigl|\Pic^0_{\F_q}(K)\bigr|$
  under the assumption of these heuristics.

  An equivalent to the Cohen-Lenstra heuristics for quadratic number fields is the heuristics by
  E.~Friedman and L.~C.~Washington \cite{friedman-washington} for quadratic function fields. The
  main difference to the number field case is that for function fields, these have been proven (in a
  slightly modified version) by J.~D.~Achter \cite{achter-districlassgroups} in certain cases. In
  our case, an older result by J.~D.~Achter and J.~Holden \cite{achter-holden-fontainemazurconj}
  already gives sufficient information. Let $\ell$ be a prime which is coprime to~$q$. Then, by
  \cite[Lemma~3.3]{achter-holden-fontainemazurconj}, the proportion of real hyperelliptic function
  fields~$K$ of genus~$g$ over $\F_q$ for which $\ell \nmid \bigl|\Pic^0_{\F_q}(K)\bigr|$ is $1 -
  \frac{\ell}{\ell^2 - 1} + \calO(1 / \ell^3)$ if $\ell \equiv 1 \pmod{q}$ and $1 - \frac{1}{\ell -
  1} + \calO(1 / \ell^3)$ if $\ell \not\equiv 1 \pmod{q}$. This is slightly less than the
  probability that a random natural number in the interval~$[(\sqrt{q} - 1)^{2 g}, (\sqrt{q} + 1)^{2
  g}]$ is not divisible by $\ell$, which is approximately~$1 - \frac{1}{\ell}$.

  Therefore, one expects that $R$ is $B$-smooth with a slightly higher probability than that a
  random natural number in the interval~$[(\sqrt{q} - 1)^{2 g}, (\sqrt{q} + 1)^{2 g}]$ is
  $B$-smooth, which is rather low for $B \ll q^g$.

  A more straightforward approach to the problem of finding function fields whose regulator is not
  $B$-smooth would be to ask the following question:
  \begin{enum1}
    \addtocounter{enumi}{1}
    \item Can one \emph{efficiently} construct function fields with two places at infinity, one of
    them of degree one, such that the regulator is known to have a very large prime factor?
  \end{enum1}
  More generally, one can also ask the following question:
  \begin{enum1}
    \addtocounter{enumi}{2}
    \item Given an arbitrary positive integer~$R$, can one efficiently construct a function field
    with two places at infinity, one of them of degree one, which has regulator~$R$, or $R \cdot
    \ell$ with $\ell \in \N$ small?
  \end{enum1}
  In the case of real elliptic function fields, this is basically equivalent to finding an elliptic
  curve together with a rational point of order~$R$, as it is explained in
  \cite{stein-realelliptic}: if $E$ is an elliptic curve over $\F_q$ with the point~$\infty$ at
  infinity, and if $P \in E(\F_q) \setminus \{ \infty \}$, one can transform the equation of $E$
  such that one obtains a function field with two places at infinity, which correspond to the two
  points~$P$ and $\infty$ of $E$. Moreover, the regulator of this new function field is exactly the
  order of $P$, and the reduced principal ideals correspond to the multiples of $P$. For
  hyperelliptic function fields, one has a similar correspondence; see \cite{paulus-rueck}.

  Currently, elliptic or hyperelliptic curves (or, more precisely, their imaginary function field
  counterparts~$K$) with a specific number of points (i.e. elements in $\Pic^0_{\F_q}(K)$) are
  usually constructed using complex multiplication, or by choosing curves from very special families
  of curves (see, for example, \cite{hehcc}). It is currently not known whether there are special
  attacks for these classes of curves.

  A final question arises from the fact that there are also proposals for cyclic infrastructure
  based cryptography for infrastructures obtained from number fields (for examples, see
  \cite{buchmann-williams-qrkeyexchange, scheidler-buchmann-williams-qrkeyexchange, JSW-RQFKE,
  JSW-RQFKE2}). In the previous section, we have seen that the Pohlig-Hellman method cannot be
  applied in the number field case in its current state. Therefore, one can ask the following:
  \begin{enum1}
    \addtocounter{enumi}{3}
    \item Can a similar method be applied to cyclic infrastructures obtained from number fields, or
    generally to non-discrete cyclic infrastructures?
  \end{enum1}
  So far, the author is not aware of any idea of whether this question can be answered positively.

  \section*{Acknowledgments}
  I would like to thank Renate Scheidler for the suggestion to consider the Pohlig-Hellman method
  for infrastructures, Michael J. Jacobson, Jr. for pointing me to H.~W.~Lenstra's paper and
  Andreas~Stein, Joachim~Rosenthal, Jeffrey~D.~Achter and the anonymous referees for their valuable
  comments and suggestions. Moreover, I would like to thank the Institut f\"ur Mathematik at the
  Carl von Ossietzky Universit\"at Oldenburg for their hospitality during my stay.

\end{document}